\documentclass[preprint,12pt]{elsarticle}
\usepackage{amssymb}
\usepackage[utf8]{inputenc}

\usepackage{amsthm}
\usepackage{amssymb,latexsym}
\usepackage{amsfonts}
\usepackage{amsmath}
\usepackage{enumerate}
\usepackage{float}
\usepackage{mathtools}
% nice lists
\usepackage{enumerate}
\usepackage{bbm}
\usepackage{bm}
%\usepackage{mathspec}

% nice tables
\usepackage{booktabs}

% colors
\usepackage[usenames, dvipsnam
es]{xcolor}
\usepackage[colorlinks,linkcolor=blue,anchorcolor=blue,citecolor=blue]{hyperref}

% graphics
\usepackage{graphicx}

% subfigures with individual captions 
\usepackage{caption}
\usepackage{subcaption}
\usepackage{comment}

% Standard sets

\newcommand\R{{\mathbb R}}

% Additional sets

% Greek symbols
\newcommand\al{\alpha}
\newcommand\be{\beta}

% Standard operators

\newcommand{\vphi}{\varphi}

% Environments for mathematical statements
\newtheorem{theorem}{Theorem}

\newtheorem{corollary}[theorem]{Corollary}

\newtheorem{proposition}[theorem]{Proposition}

\newtheorem{example}[theorem]{Example}

\usepackage{atbegshi}
\AtBeginDocument{\AtBeginShipoutNext{\AtBeginShipoutDiscard}}
\addtocounter{page}{-1}

\journal{arXiv}

\begin{document}

\begin{frontmatter}
%\title{Overview of interactions between some probability distributions and probability of default}
%\title{Estimating Probabilities of Default for Low
%Default Portfolios Revisited}
%\title{Estimating Probabilities of Default for Low
%Default Portfolios by K. Pluto and D. Tasche Revisited}
\title{Probabilistic Overview of Probabilities of Default for Low
Default Portfolios by K. Pluto and D. Tasche}
\author[inst1]{Andrius Grigutis\footnote{Despite the use of the ''authorial we'', common in academia and meaning the author and the reader, this article is the sole work of its author.}}
\
\affiliation[inst1]{organization={Institute of Mathematics},%Department and Organization
            addressline={Naugarduko 24}, 
            city={Vilnius},
            postcode={LT-03225}, 
            country={Lithuania}}
%\author[inst2]{Author Two}
%\author[inst1,inst2]{Author Three}

%\affiliation[inst2]{organization={Department Two},%Department and Organization
%            addressline={Address Two}, 
%            city={City Two},
%            postcode={22222}, 
%            state={State Two},
%            country={Country Two}}

\begin{abstract}
This article gives a probabilistic overview of the widely used method of default probability estimation proposed by K. Pluto and D. Tasche. There are listed detailed assumptions and derivation of the inequality where the pro\-ba\-bi\-li\-ty of default is involved under the influence of systematic factor. The author anti\-ci\-pates adding more clarity, especially for early career an\-a\-lysts or scholars, regarding the assumption of borrowers' independence, conditional independence and interaction between the pro\-ba\-bi\-lity distributions such as binomial, beta, normal and others. There is also shown the relation between the probability of default and the joint distribution of $\sqrt{\varrho}X-\sqrt{1-\varrho}Y$, where $X$, including but not limiting, is the standard normal, $Y$ admits, including but not limiting, the beta-normal distribution and $X,\,Y$ are independent.
\end{abstract}

\begin{keyword}
probability of default, binomial distribution, beta-normal distribution, Vasicek distribution, independence, Pluto-Tasche method
\MSC 60-02, 60E05, 62-02, 26D10
\end{keyword}

\end{frontmatter}

\section{Introduction}\label{sec:introdution}

The {\it probability of default} $p$ is actually the most important metric in credit risk management. Roughly, this probability provides the likelihood for a certain obligor not to follow the taken financial commitments properly within a certain period of time, typically one year. The number of defaulted borrowers divided by the number of total borrowers within a certain portfolio is known as {\it the observed default rate}, while the predicted one $p$ is called {\it the expected default rate}. Any model tasked to predict $p$, should ensure the alignment between the observed and expected default rates. However, in some instances, such as low default portfolios, there is not possible to have any robust observations for the observed default rate. In such instances, the famous work \cite{Pluto2006} suggests applying the Bernoulli trials 
to estimate the experiment's success probability $p$. This work is a survey of two models: (a) the estimation of $p$ when obligors in the portfolio are treated independently of each other and there is no side influence for such a portfolio, (b) the estimation of $p$ when obligors in the portfolio are treated conditionally independent of each other when each obligor is influenced by some systematic factor. We aim to reflect on the detailed steps and assumptions used in deriving the two mentioned models.   

Let us recall several well-known probability distributions.
\begin{itemize}
\item We say that the random variable $X$ is binomial distributed with pa\-ra\-me\-ters $n\in\mathbb{N}$ and $p\in(0,\,1)$ (denoted $X\sim\mathcal{B}in(n,\,p)$) if the probability mass function is
$$
\mathbb{P}(X=k)=\binom{n}{k}p^k(1-p)^{n-k},\,k=0,\,1,\,\ldots,\,n,
$$
where 
$$
\binom{n}{k}=\frac{n!}{k!(n-k)!}.
$$
We denote the cumulative distribution function of the binomial random variable by  
$$
\mathcal{B}in_{n,\,p}(k):=\mathbb{P}(X\leqslant k)=\sum_{i=0}^{k}\binom{n}{i}p^i(1-p)^{n-i},\,
k=0,\,1,\,\ldots,\,n.
$$
\item We say that the random variable $X$ is beta distributed with parameters $\alpha>0$ and $\beta>0$ (denoted $X\sim\mathcal{B}(\alpha,\,\beta)$) if its probability density function is 
$$
b_{\al,\,\be}(x):=\frac{x^\alpha(1-x)^\beta}{B(\alpha,\,\beta)},\,x\in(0,\,1),
$$
where
$$
B(\al,\,\be)=\frac{\Gamma(\al)\Gamma(\be)}{\Gamma(\al+\be)}
$$
and the gamma function for the complex number $s\in\mathbb{C}$ is
$$
\Gamma(s)=\int_{0}^{\infty}t^{s-1}e^{-t}dt,\,\Re s>0.
$$

We denote the cumulative distribution function of the beta random variable by  
$$
\mathcal{B}_{\al,\,\be}(x)=\int_{0}^{x}b_{\al,\,\be}(y)dy,\,x\in(0,\,1)
$$
and its inverse by $\mathcal{B}^{-1}_{\al,\,\be}(x),\,x\in(0,\,1)$.

\item We say that the random variable $X$ is normally distributed with pa\-ra\-me\-ters $\mu\in\R$ and $\sigma^2>0$ (denoted by $X\sim\mathcal{N}(\mu,\,\sigma^2)$) if its probability density function is 
$$
\vphi_{\mu,\,\sigma^2}(x)=\frac{1}{\sigma\sqrt{2\pi}}e^{-(x-\mu)^2/2\sigma^2},\,x\in\mathbb{R}.
$$
We denote the cumulative distribution function of the normal distribution by

$$
\Phi_{\mu,\,\sigma^2}(x)=\int_{-\infty}^{x}\vphi_{\mu,\,\sigma^2}(y)dy,\,x\in\mathbb{R}
$$
and its inverse by $\Phi^{-1}_{\mu,\,\sigma^2}(x)$. We recall the symmetry $\Phi_{\mu,\,\sigma^2}(x)=1-\Phi_{\mu,\,\sigma^2}(-x+2\mu)$ and write $\vphi(x)$, $\Phi(x)$ and $\Phi^{-1}(x)$ respectively if $X\sim\mathcal{N}(0,\,1)$. 
\end{itemize}

The other distributions met in this paper, are introduced in the proper places where they are used.

\section{Binomial and mixture binomial distributions for the probability of default estimation}\label{section_2}

In this section, in the subsections \ref{sub1} and \ref{sub2} respectively, we review the derivation of two methods used to estimate the probability of default $p$. As mentioned in Introduction \ref{sec:introdution}, the first method is just the Bernoulli trials assuming the obligors' independence, while the second method provides the estimation of $p$ under the assumption of obligors' conditional independence of each other under the influence of a certain systematic factor.

\subsection{Binomial distribution}\label{sub1}

Let $X_1,\,X_2,\,\ldots,\,X_n$ be independent copies of Bernoulli random variable $X$ which distribution is $\mathbb{P}(X=1)=p=1-\mathbb{P}(X=0)$. In risk management, the random variables $X_1,\,X_2,\,\ldots,\,X_n$ are treated as independent obligors and the attained value $X_i=1$, $i=1,\,2,\,\ldots,\,n$ means that the $i$'th obligor defaults within some observation period (typically one year), while $X_i=0$, $i=1,\,2,\,\ldots,\,n$ means that the $i$'th obligor does not default within the same observation period. Then, the sum $Y:=X_1+X_2+\ldots+X_n$ may attain any value form the set $\{0,\,1,\,\ldots,\,n\}$ with probability
\begin{align}\label{binom_m}
\mathbb{P}(Y=k)=\binom{n}{k}p^k(1-p)^{n-k},\,
k=0,\,1,\,\ldots,\,n.
\end{align}

The probability mass function of the binomial distribution \eqref{binom_m} means the probability to default $k$ out of total $n$ obligors in the portfolio, while the distribution function \eqref{binom_d} 
\begin{align}\label{binom_d}
\mathcal{B}in_{n,\,p}(k)=\mathbb{P}(Y\leqslant k)=\sum_{i=0}^{k}\binom{n}{i}p^i(1-p)^{n-i},\,k=0,\,1,\,\ldots,\,n
\end{align}
is the probability to default no more than $k$ obligors out of total $n$. Obviously, 
$$
\mathcal{B}in_{n,\,p}(n)=\mathbb{P}(Y\leqslant n)=\sum_{i=0}^{n}\binom{n}{i}p^i(1-p)^{n-i}=(p+1-p)^n=1
$$
for any $p\in(0,1)$. 

In many examples, e.g., tossing a coin or a die, the experiment's success probability $p$ is known beforehand. However, in real-life problems, such as default probability estimation, the probability $p$ is desired to know. In order to get $p$ out of \eqref{binom_m} or \eqref{binom_d} we need an expert judgment first. Let us suppose that the probability of the amount of defaulted obligors does not exceed $k\in\{0,\,1,\,\ldots,\,n-1\}$ out of $n$ is at least $1-\gamma$. Then, 
\begin{align}\label{bin_ineq}\nonumber
&\mathcal{B}in_{n,\,p}(k)=\mathbb{P}(Y\leqslant k)\\
&=\sum_{i=0}^{k}\binom{n}{i}p^i(1-p)^{n-i}\geqslant (1-\gamma),\,
k\in\{0,\,1,\,\ldots,\,n-1\}
\end{align}
and, in view of Proposition \ref{beta_prop}, the upper bound of default probability $p$ is
\begin{align}\label{p_est_no_syst}
p\leqslant 1-\mathcal{B}^{-1}_{n-k,\,k+1}(1-\gamma),
\end{align}
where $\mathcal{B}^{-1}_{n-k,\,k+1}(\cdot)$ is an inverse of beta distribution function. In particular, if $k=0$, i.e., we are certain with probability $1-\gamma$ that there be no defaulted obligors at all, then
$$
p\leqslant 1 -(1-\gamma)^{1/n}.
$$

The probability $1-\gamma$ can be introduced as the probability of type I error, also known as the false positive instance classification, which in our context means that the actual probability of default does not belong to the predicted interval
$0\leqslant p\leqslant 1-\mathcal{B}^{-1}_{n-k,\,k+1}(1-\gamma)$; see \cite{beta_binom}. Moreover, the confidence interval of the binomial distribution is known as the Clopper–Pearson interval; see \cite{CLopper_Pearson}, \cite{CLopper_Pearson2}.
 
\subsection{Mixture of Binomial and Normal distributions}\label{sub2}

Let $r$ be an annual return rate and $(1+r/n)^n,\,n\in\mathbb{N}$ the increment of the invested amount when the return rate is compounded $n$ times per year. It is well known that $(1+r/n)^n\to e^r$ when $n\to\infty$. Thus, 
\begin{align*}
V_F=V_Ie^r
\end{align*}
assuming the continuously compounded return, where $V_{F}>0$ denotes the final value and $V_{I}>0$ the initial one. Based on the previous thoughts, we define
\begin{align}\label{log_return}
r_{log}:=\ln \frac{V_{F}}{V_{I}}=\ln V_F-\ln V_I.
\end{align}
The return derived in \eqref{log_return} is called the {\it logarithmic return} or just {\it log-return}. We now assume the logarithmic return to be a random variable. More precisely, we assume 
\begin{align}\label{log_r_rand}
r_{log}=\beta S+\xi,
\end{align}
where $\beta\in\mathbb{R}$, $S\sim\mathcal{N}(\mu_1,\,\sigma^2_1)$, $\xi\sim\mathcal{N}(\mu_2,\,\sigma^2_2)$, the random variables $S$ and $\xi$ are independent and both non-degenerate. Also, $S$ is known as {\it systematic} risk factor, while $\xi$ as {\it idiosyncratic}; see \cite{syst_idio}. The origin of return's definition \eqref{log_r_rand} has similarities with the {\it capital asset pricing model} which states that every expected return $\mathbb{E}r_i$ under certain assumptions satisfies
$$
\mathbb{E}(r_i-r_f)=\beta_i\mathbb{E}(r_M-r_f),
$$
where $r_f$ is the risk-free return rate, $r_M$ the return of systemic portfolio $M$ and $\beta_i=cov(r_i,\,r_M)/\sigma^2_M$, see, for example, \cite{CAPM} and observe that \eqref{log_r_rand} implies $\mathbb{E}(r_{log}-\xi)=\beta\mathbb{E}S$.

Let us now standardize the log-return \eqref{log_r_rand}. It is easy to check that
$$
r_{log}=\beta S+\xi \quad \Leftrightarrow \quad
\frac{r_{log}-\mathbb{E}r_{log}}{\sigma_{r_{log}}}=
\beta\frac{\sigma_S}{\sigma_{r_{log}}}\frac{S-\mathbb{E}S}{\sigma_S}
+\frac{\sigma_{\xi}}{\sigma_{r_{log}}}\frac{\xi-\mathbb{E}\xi}{\sigma_{\xi}}.
$$
Thus, it is equivalent to define $r_{log}$ as
\begin{align}
\tilde{r}_{log}=\sqrt{\varrho}\tilde{S}+\sqrt{1-\varrho}\tilde{\xi},\,\varrho\in[0,1],
\end{align}
where 
\begin{align}\label{rho}
\varrho=\left(\beta\frac{\sigma_S}{\sigma_{r_{log}}}\right)^2
=\frac{\beta^2 \sigma_S^2}{\beta^2 \sigma^2_S+\sigma^2_{\xi}}
, 
\end{align}
and $\tilde{S}$, $\tilde{\xi}$ are independent standard normal random variables. Indeed, $r_{log}\sim\mathcal{N}(\beta\mu_1+\mu_2,\,\beta^2\sigma_1^2+\sigma_2)$ is quivalent to 
$\tilde{r}_{log}\sim\mathcal{N}(0,\,1)$.

We note that
$$
1=\sigma^2_{\tilde{r}_{log}}=\left(\beta\frac{\sigma_S}{\sigma_{r_{log}}}\right)^2+\left(\frac{\sigma_{\xi}}{\sigma_{r_{log}}}\right)^2
$$
and the coefficient $\varrho$ in \eqref{rho} is called the {\it asset correlation} (see \cite{asset_cor}); it expresses the correlation between $\tilde{r}_{log}$ and $\tilde{S}$:
$$
corr(\tilde{r}_{log},\,\tilde{S})=cov(\sqrt{\varrho}\tilde{S}+\sqrt{1-\varrho}\tilde{\xi},\,\tilde{S})=\sqrt{\varrho}.
$$

We now define the default event $D$ by

\begin{align}\label{D}
D=
\begin{cases}
&1,\textit{ if } \sqrt{\varrho}\tilde{S}+\sqrt{1-\varrho}\tilde{\xi}<x_p, \\
&0, \textit{ otherwise}.
\end{cases}
\end{align}
Of course, $D$ is Bernoulli random variable and $x_p=\Phi^{-1}(p)$ since the random variable $\sqrt{\varrho}\tilde{S}+\sqrt{1-\varrho}\tilde{\xi}$ is standard normal.  We now are interested in that particular $p$ which causes $D=1$. Conditioning on $\tilde{S}$, i.e., assuming that the systematic factor attains some particular value $x\in\mathbb{R}$, for $\varrho\neq1$, we have
\begin{align}\label{PD}\nonumber
&\mathbb{P}(D=1|\tilde{S}=x)\\
&=\mathbb{P}\left(\tilde{\xi}<\frac{\Phi^{-1}(p)-\sqrt{\varrho}\tilde{S}}{\sqrt{1-\varrho}}\Bigg| \tilde{S}=x\right)
=\Phi\left(\frac{\Phi^{-1}(p)-\sqrt{\varrho}x}{\sqrt{1-\varrho}}\right)
\end{align}
and
\begin{align}\label{PD0}
\mathbb{P}(D=0|\tilde{S}=x)=1-\Phi\left(\frac{\Phi^{-1}(p)-\sqrt{\varrho}x}{\sqrt{1-\varrho}}\right).
\end{align}
The random variable 
\begin{align}\label{Vasicek}
\Phi\left(\frac{\Phi^{-1}(p)-\sqrt{\varrho}\tilde{S}}{\sqrt{1-\varrho}}\right),\,p\in(0,1),\,\varrho\in[0,1),
\end{align}
where $\tilde{S}\sim\mathcal{N}(0,\,1)$, is known as {\it Vasicek distribution}, see \cite{Vasicek}.

Let $D_1,\,D_2,\,\ldots,D_n$ be the conditionally independent copies of the random variable $D$ when the systematic factor $\tilde{S}=x$. Then, $\mathcal{D}:=D_1+D_2+\ldots+D_n$ is binomial random variable and the conditional probability that $\mathcal{D}=i$ if $\tilde{S}=x$ is
\begin{align}\label{bin_dist_s}\nonumber
&\mathbb{P}(\mathcal{D}=i|\tilde{S}=x)\\
&=\binom{n}{i}
\Phi^i\left(\frac{\Phi^{-1}(p)-\sqrt{\varrho}x}{\sqrt{1-\varrho}}\right)
\left(1-\Phi\left(\frac{\Phi^{-1}(p)-\sqrt{\varrho}x}{\sqrt{1-\varrho}}\right)\right)^{n-i},
\end{align}
where $i=0,\,1,\,\ldots,\,n$. Thus, being certain with probability at least $1-\gamma$, that there default up to $k\in\{0,\,1,\,\ldots,\,n-1\}$ obligors out of total $n$, by the law of total probability we get
\begin{align}\label{main_ineq}\nonumber
&\mathbb{P}(\mathcal{D}\leqslant k)=\mathbb{E}\left(\mathbb{P}(\mathcal{D}\leqslant k|\tilde{S}=x)\right)=\\ \nonumber
&\int_{-\infty}^{+\infty}\varphi(x)\sum_{i=0}^{k}\binom{n}{i}\left(\Phi\left(\frac{\Phi^{-1}(p)-\sqrt{\varrho}x}{\sqrt{1-\varrho}}\right)\right)^i\left(1-\Phi\left(\frac{\Phi^{-1}(p)-\sqrt{\varrho}x}{\sqrt{1-\varrho}}\right)\right)^{n-i}\hspace{-0.5cm}dx\\
&\geqslant 1-\gamma.
\end{align}
Notice that if $k=n$, then the inequality \eqref{main_ineq} is satisfied with any $p\in(0,\,1)$ when $\gamma\in[0,\,1]$. Also, $\varrho=0$ in \eqref{main_ineq} implies the inequality \eqref{bin_ineq}. Equally, the integral in \eqref{main_ineq} is nothing but the mixture of the binomial and Vasicek distributions: it is the cumulative binomial distribution function $\mathcal{B}in_{n,\,p}(k),\,k=0,\,1,\,\ldots,\,n$ when the parameter $p$ is Vasicek distributed \eqref{Vasicek}. See \cite{mixture} for the mixture distribution models.

According to Proposition \ref{main_prop}, the upper bound of $p$ in \eqref{main_ineq} is
\begin{align}\label{p_est_syst}
p\leqslant 1-\Phi\left(\sqrt{1-\varrho}F_{n-k,\,k+1,\,\varrho}^{-1}(1-\gamma)\right), 
\end{align}
where $F^{-1}_{n-k,\,k+1,\,\varrho}(\cdot)$ is the inverse of the cumulative distribution function
\begin{align}\label{cum_dist}
F_{n-k,\,k+1,\,\varrho}(y)=\int_{0}^{1}\mathcal{B}_{n-k,\,k+1}\left(\Phi\left(\sqrt{\frac{\varrho}{1-\varrho}}\Phi^{-1}(x)+y\right)\right)dx,\,y\in\mathbb{R}.
\end{align}

It is not easy to get a more convenient expression of the cumulative distribution function $F_{n-k,\,k+1,\,\varrho}(y)$ in \eqref{cum_dist}. Thus, we should search for the quantiles of the underlying distribution, described by $F_{n-k,\,k+1,\,\varrho}(y)$, numerically; see Section \ref{sec:examples}. Of course, the function $F_{n-k,\,k+1,\,\varrho}(y)$ is defined in view of Proposition \ref{main_prop} by replacing 
$$
y=-\frac{\Phi^{-1}(p)}{\sqrt{1-\varrho}}
$$
in \eqref{eq:second} and there is equivalent to search for such $p\in(0,1)$ that

$$
F_{n-k,\,k+1,\,\varrho}\left(-\frac{\Phi^{-1}(p)}{\sqrt{1-\varrho}}\right)\geqslant 1-\gamma
$$
or
\begin{align}\label{dist_f}
\tilde{F}_{n-k,\,k+1,\,\varrho}(p):=1-F_{n-k,\,k+1,\,\varrho}\left(-\frac{\Phi^{-1}(p)}{\sqrt{1-\varrho}}\right)\leqslant\gamma,
\end{align}
where $\tilde{F}_{n-k,\,k+1,\,\varrho}(p),\, p\in(0,1)$ is the continuous cumulative distribution function with respect to $p$.

Let us mention that the probability distribution, described by its cumulative distribution function
$$
\mathcal{BN}_{\alpha,\,\beta,\,\mu,\,\sigma^2}(x):=\mathcal{B}_{\alpha,\,\beta}\left(\Phi_{\mu,\,\sigma^2}(x)\right),\,x\in \mathbb{R},
$$
is known as beta-normal. We write $X\sim\mathcal{BN}(\alpha,\,\beta,\,\mu,\,\sigma^2)$ if $X$ is the beta-normal random variable and $bn_{\alpha,\,\beta,\,\mu,\,\sigma^2}(x)$, $x\in\mathbb{R}$ denotes its density. See \cite{BN1}, \cite{BN2}, \cite{BN3} and \cite{Beta_moments} for the beta-normal distribution. Thus, $F_{\al,\,\be,\,\varrho}(y)$ in \eqref{cum_dist} can be easily described in terms of the beta-normal distribution. See also \cite{Bluhm} as the good initial source on credit risk management and some other insights deriving inequality \eqref{main_ineq}. Equally, in view of \eqref{cum_dist}, we depict the probability density function
$$
\frac{d\,F_{\alpha,\,\beta,\,\varrho}(y)}{dy},\,y\in\mathbb{R},\,\alpha>0,\,\beta>0,\,0\leqslant\varrho<1
$$
for some chosen parameters in Figure \ref{fig} and the cumulative distribution function $F_{\alpha,\,\beta,\,\varrho}(y)$ itself correspondingly in Figure \ref{fig_1} below.
\begin{figure}[H]
\includegraphics[scale=0.75]{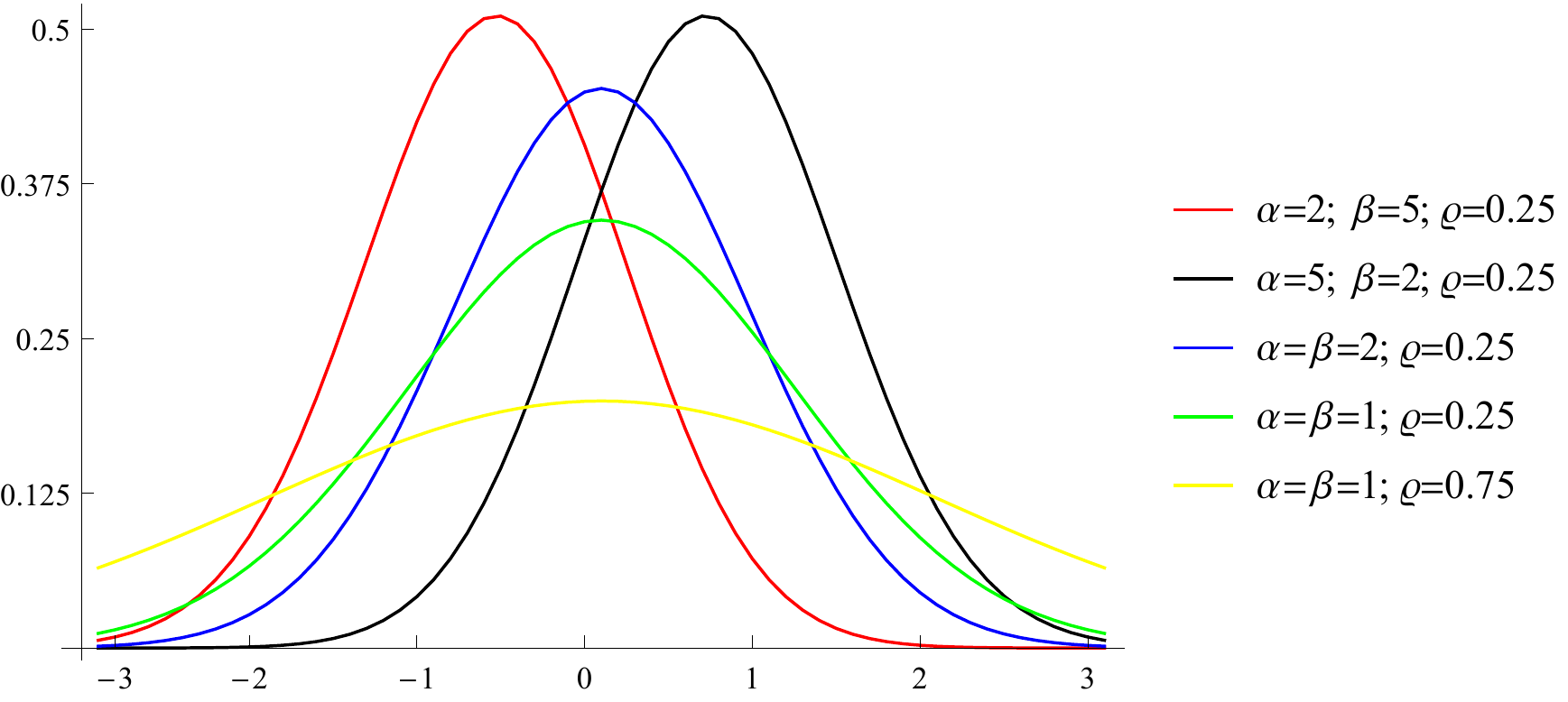}
\caption{The probability density function whose cumulative distribution function is $F_{\alpha,\,\beta,\,\varrho}(y)$, $y\in\mathbb{R}$.}\label{fig}
\end{figure}

\begin{figure}[H]
\includegraphics[scale=0.75]{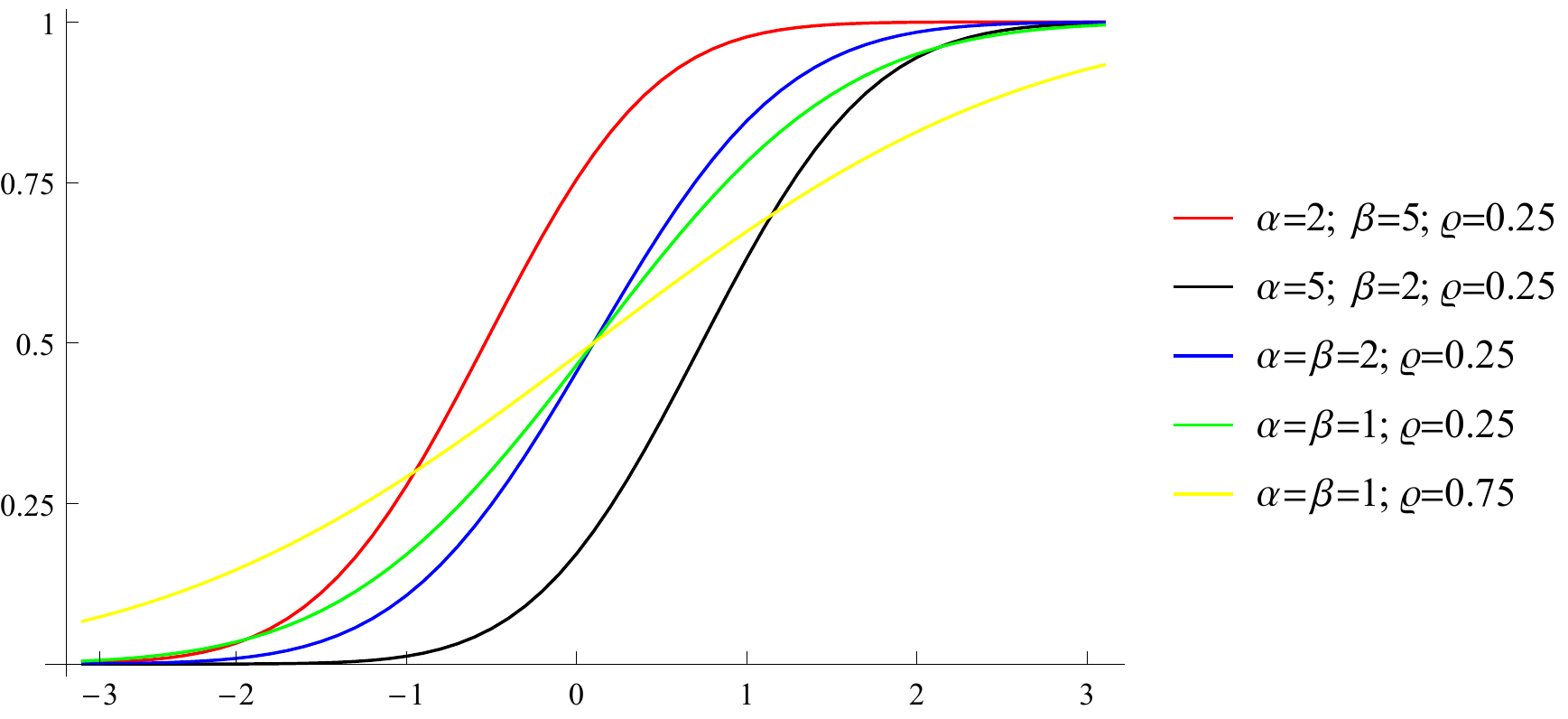}
\caption{The cumulative distribution function $F_{\alpha,\,\beta,\,\varrho}(y)$, $y\in\mathbb{R}$.}\label{fig_1}
\end{figure}

The derivative of $\tilde{F}_{n-k,\,k+1,\,\varrho}(p)$ in \eqref{dist_f} and the cumulative distribution function $\tilde{F}_{n-k,\,k+1,\,\varrho}(p)$ itself for some chosen parameters are depicted in Figure \ref{fig_2} and Figure \ref{fig_3} below respectively.

\begin{figure}[H]
\includegraphics[scale=0.7]{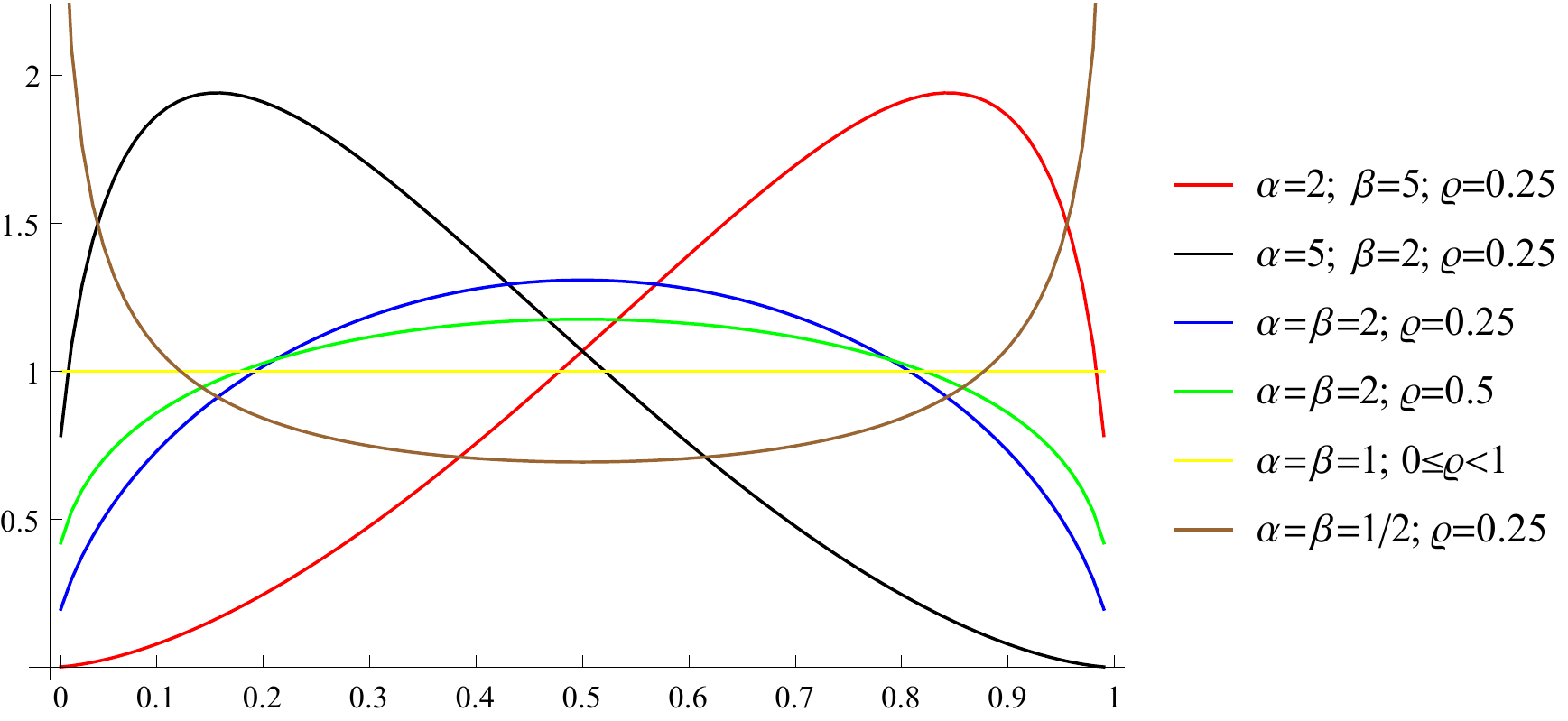}
\caption{The probability density function whose cumulative distribution function is $\tilde{F}_{n-k,\,k+1,\,\varrho}(p)$, $p\in(0,\,1)$.}\label{fig_2}
\end{figure}

\begin{figure}[H]
\includegraphics[scale=0.75]{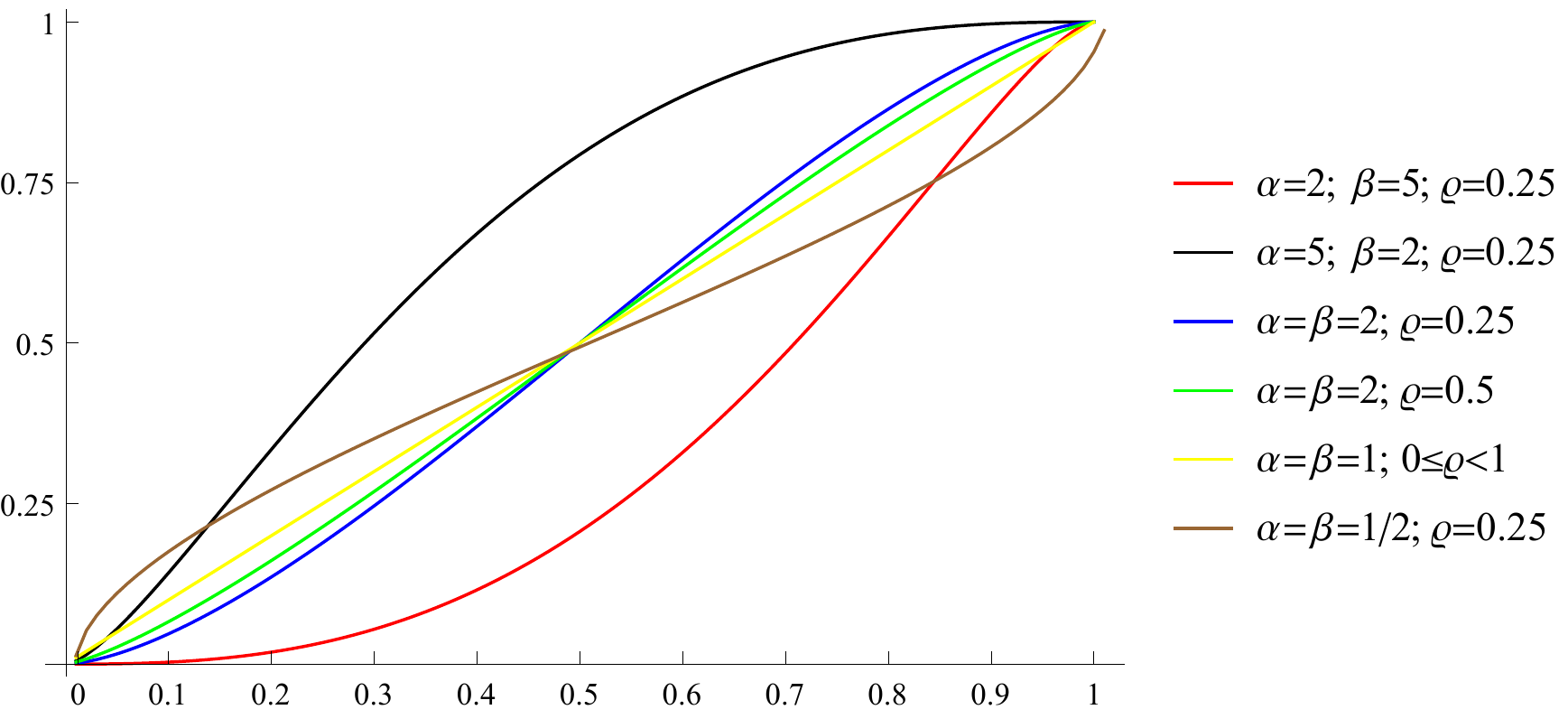}
\caption{The cumulative distribution function $\tilde{F}_{n-k,\,k+1,\,\varrho}(p)$, $p\in(0,\,1)$.}\label{fig_3}
\end{figure}

The depicted functions in Figures \ref{fig}-\ref{fig_3} relate the search of the upper bound of $p$ with the quantile function of $1-\gamma$ under the underlying distribution. According to Propositions \ref{main_prop} and \ref{prop:dist_eq}, in Figures \ref{fig_4}, \ref{fig_5} and \ref{fig_6} below we illustrate the search of the upper bound of $p$ relation with the partial volume of the unit given by $1-\gamma$ (the blue colored volume in Figures \ref{fig_4}, \ref{fig_5} and \ref{fig_6}) under the joint density surface.  

\begin{figure}[H]
\begin{center}
\includegraphics[scale=0.7]{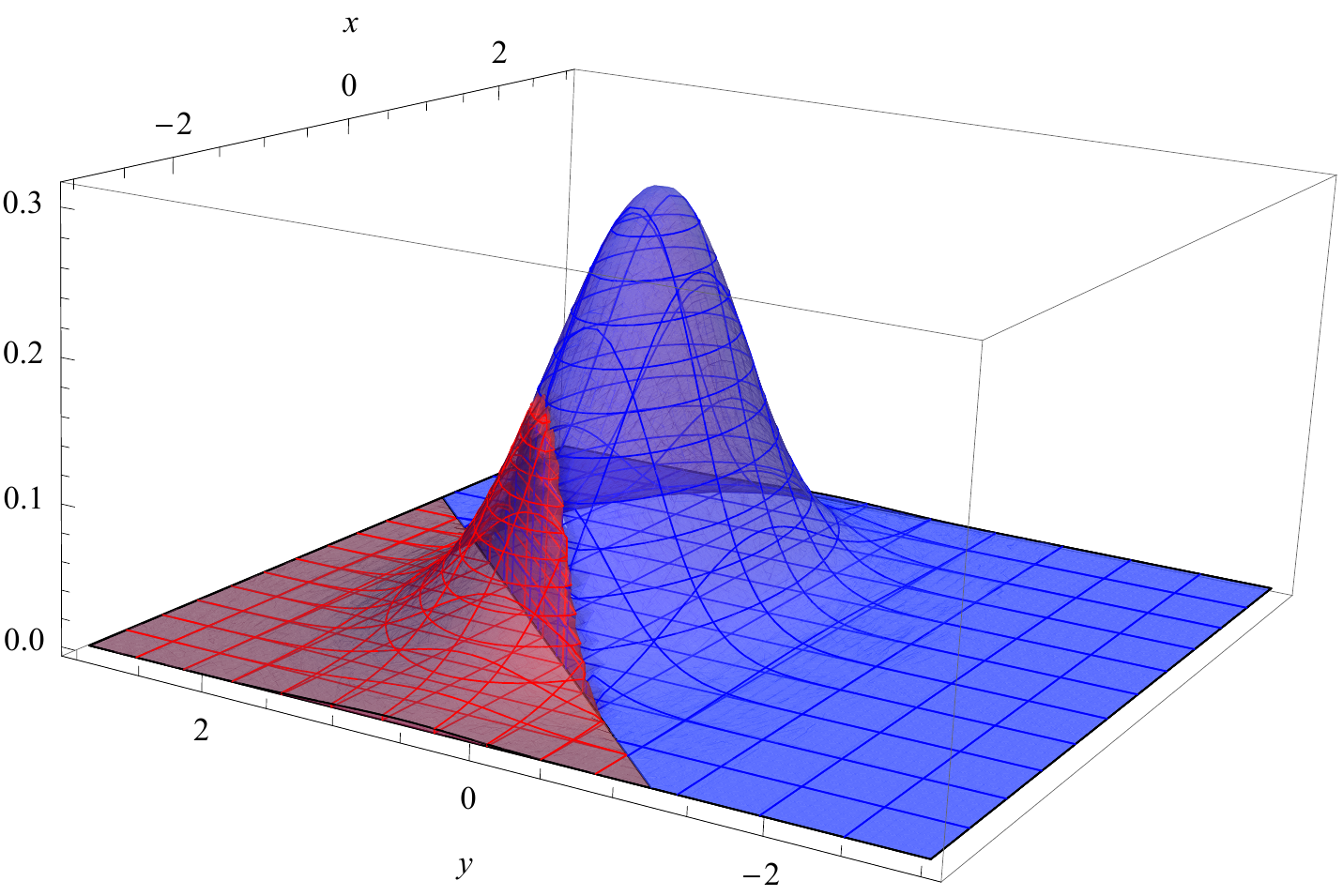}
\end{center}
\caption{The joint density $\varphi(x)bn_{\alpha,\,\beta,\,0,\,1}(y)$, $(x,\,y)\in\mathbb{R}^2$ and the line $\sqrt{\varrho}x-\sqrt{1-\varrho}y=\Phi^{-1}(p)$, when $\alpha=5$, $\beta=2$, $\varrho=1/2$ and $p=1/10$. The red colored volume corresponds to $\sqrt{\varrho}x-\sqrt{1-\varrho}y<\Phi^{-1}(p)$, while the blue one is $1-\gamma=0.869$. In other words, the inequality \eqref{main_ineq} with $1-\gamma=0.869$, $k=1$, $n=6$ and $\varrho=1/2$ is satisfied when $p\in(0,\,1/10]$.}\label{fig_4}
\end{figure}
\vspace{-0.5cm}
\begin{figure}[H]
\begin{center}
\includegraphics[scale=0.7]{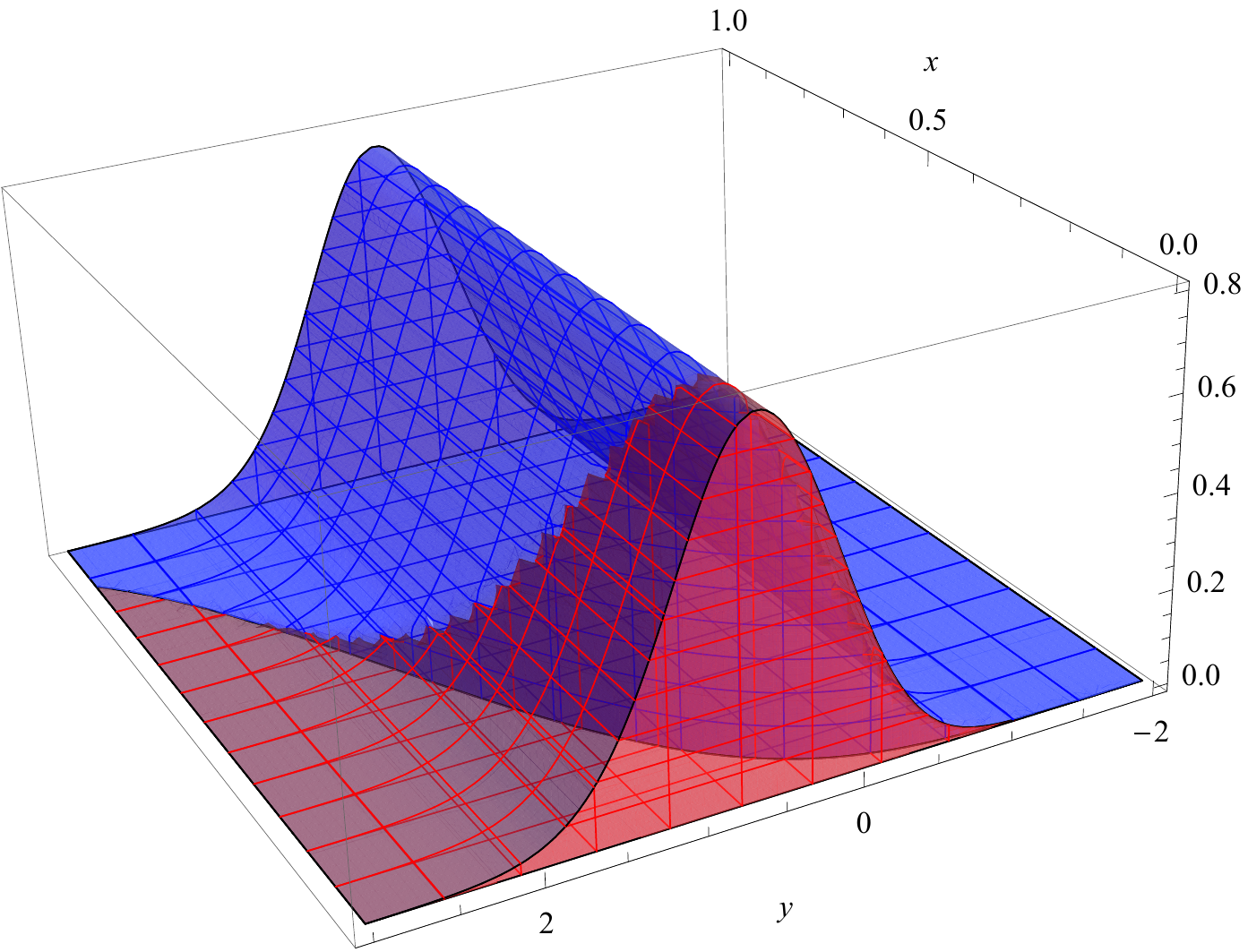}
\end{center}
\caption{The joint density $1\cdot bn_{\alpha,\,\beta,\,0,\,1}(y)$, $0<x<1$, $y\in\mathbb{R}$ and the curve $\sqrt{\varrho}\Phi^{-1}(x)-\sqrt{1-\varrho}y=\Phi^{-1}(p)$, when $\alpha=5$, $\beta=2$, $\varrho=1/2$ and $p=1/10$. The red colored volume corresponds to $\sqrt{\varrho}\Phi^{-1}(x)-\sqrt{1-\varrho}y<\Phi^{-1}(p)$, while the blue one is $1-\gamma=0.869$.}\label{fig_5}
\end{figure}

\begin{figure}[H]
\begin{center}
\includegraphics[scale=0.7]{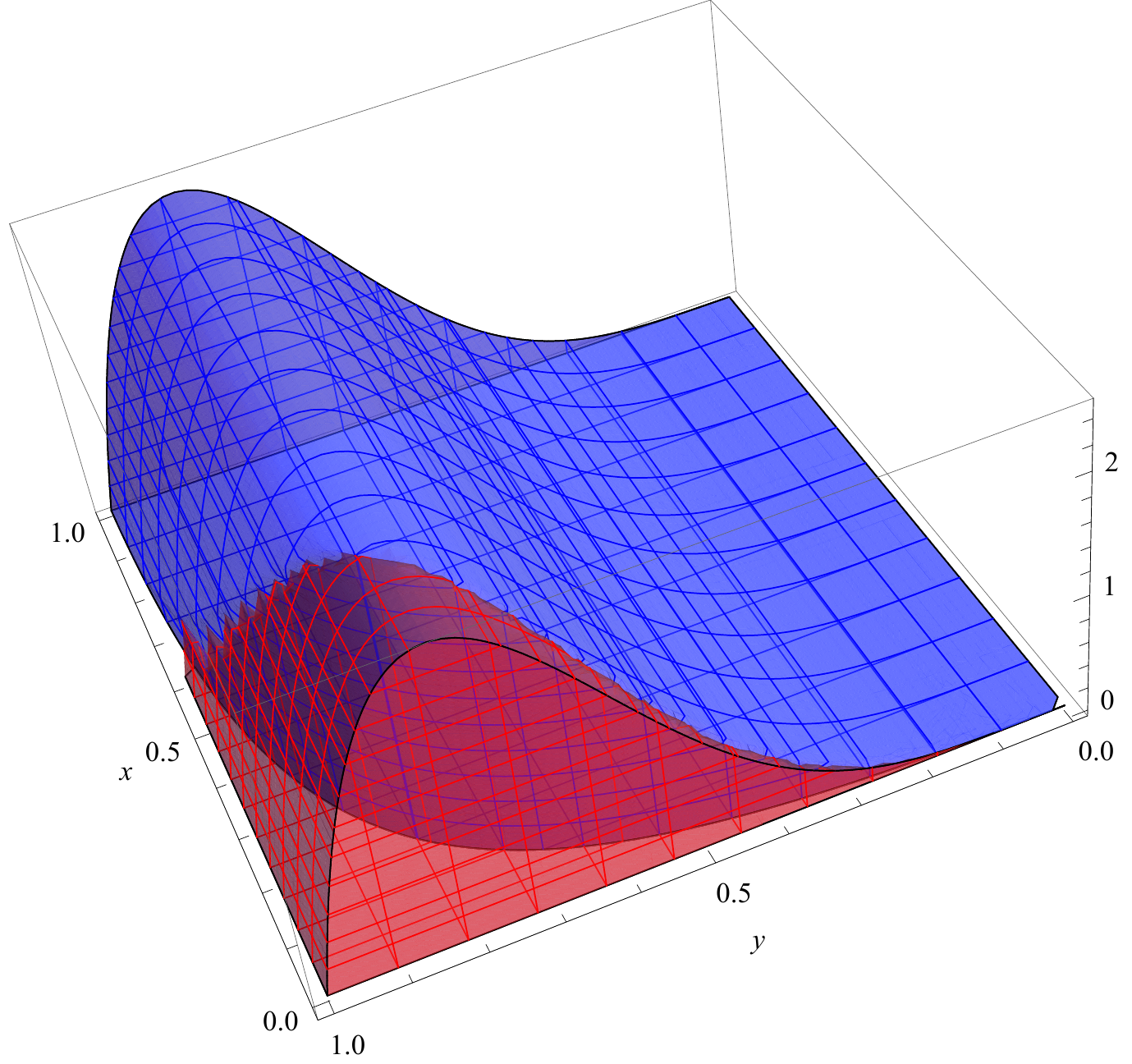}
\end{center}
\caption{The joint density $1\cdot b_{\alpha,\,\beta}(y)$, $0<x,\,y<1$, and the curve $\sqrt{\varrho}\Phi^{-1}(x)-\sqrt{1-\varrho}\Phi^{-1}(y)=\Phi^{-1}(p)$, when $\alpha=5$, $\beta=2$, $\varrho=1/2$ and $p=1/10$. The red co\-lo\-red volume corresponds to $\sqrt{\varrho}\Phi^{-1}(x)-\sqrt{1-\varrho}\Phi^{-1}(y)<\Phi^{-1}(p)$, while the blue one is $1-\gamma=0.869$.}\label{fig_6}
\end{figure}

To estimate the probability of default $p$ by \eqref{p_est_no_syst} or \eqref{p_est_syst} among the portfolio sub-classes $A_1,\,A_2,\,\ldots,\,A_l$, where $A_1$ represents the lowest risk borrowers and $A_l$ the highest respectively, there was proposed a {\it method of conservatism}; see \cite{Pluto2006}. The method of conservatism states the following. Let $n_1,\,n_2,\,\ldots,\,n_l$, $k_1,\,k_2,\,\ldots,\,k_l$ and $p_1,\,p_2,\,\ldots,\,p_l$ be the number of obligors, the number of expected defaults and default probabilities over the portfolio sub-classes $A_1,\,A_2,\,\ldots,\,A_l$ respectively. Then $n_1+n_2\ldots+n_l=n$, $k_1+k_2+\ldots +k_l=k$ and the probability of defaults $p_1$ should be estimated using the parameters $(n,\,k)$ in \eqref{p_est_no_syst} or \eqref{p_est_syst}, $p_2$ should be estimated using $(n-n_1,\,k-k_1)$, $p_3$ with $(n-n_1-n_2,\,k-k_1-k_2)$ and so on up to $p_l$ which should be estimated using $(n_l,\,k_l)$.

Discussions and dissatisfaction among the practitioners that the estimates \eqref{p_est_no_syst} or \eqref{p_est_syst} of the probability of default are too conservative, force some adjustments to estimate $p$ conditionally (biased), see \cite{Dirk} and related papers.

\section{Statements}\label{sec:statements}

In this section, we recall the connection between the binomial and beta distributions, provide several equivalent forms of inequality \eqref{main_ineq} and its connection to the normal multivariate distribution when there are no expected defaults, i.e. $k=0$.

\begin{proposition}\label{beta_prop}
Let $n\in\mathbb{N}$, $k\in\{0,\,1,\,\ldots,\,n-1\}$ be fixed and $p\in(0,\,1)$. Then the cumulative distribution function of binomial and beta random variables are related as
$$
1-\mathcal{B}_{k+1,n-k}(p)=\mathcal{B}_{n-k,k+1}(1-p)=\mathcal{B}in_{n,p}(k),\,p\in(0,\,1).
$$
\end{proposition}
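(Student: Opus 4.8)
The plan is to prove the two equalities separately. Write $f(p):=\mathcal{B}in_{n,p}(k)$ and $g(p):=\mathcal{B}_{n-k,k+1}(1-p)$ for $p\in(0,1)$. The right-hand equality $g(p)=f(p)$ is the substantive one, and I would establish it by the fundamental theorem of calculus together with a telescoping computation; the left-hand equality $1-\mathcal{B}_{k+1,n-k}(p)=g(p)$ is just the reflection symmetry of the beta law and follows from a single change of variables.

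For the reflection symmetry I would start from $\mathcal{B}_{\al,\be}(x)=\frac{1}{B(\al,\be)}\int_0^x t^{\al-1}(1-t)^{\be-1}\,dt$ and substitute $t=1-s$. Using $B(\al,\be)=B(\be,\al)$, this yields the identity $\mathcal{B}_{\al,\be}(x)=1-\mathcal{B}_{\be,\al}(1-x)$, valid for all $\al,\be>0$ and $x\in(0,1)$. Specializing to $\al=k+1$, $\be=n-k$ and $x=p$ gives $1-\mathcal{B}_{k+1,n-k}(p)=\mathcal{B}_{n-k,k+1}(1-p)=g(p)$, which is the left equality.

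The heart of the argument is $f=g$, which I would prove by showing $f'=g'$ on $(0,1)$ and matching one boundary value. By the fundamental theorem of calculus, $g'(p)=-b_{n-k,k+1}(1-p)=-\dfrac{(1-p)^{n-k-1}p^{k}}{B(n-k,k+1)}$, and since $1/B(n-k,k+1)=n!/\big((n-k-1)!\,k!\big)$ this equals $-n\binom{n-1}{k}p^{k}(1-p)^{n-1-k}$. For $f'$, I would differentiate the sum termwise and use $i\binom{n}{i}=n\binom{n-1}{i-1}$ and $(n-i)\binom{n}{i}=n\binom{n-1}{i}$, so that the derivative of the $i$-th summand is $u_{i-1}-u_i$ with $u_i:=n\binom{n-1}{i}p^{i}(1-p)^{n-1-i}$. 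The sum then telescopes to $f'(p)=u_{-1}-u_k=-u_k$, because $\binom{n-1}{-1}=0$; this is exactly $g'(p)$. Finally, at $p=0$ only the $i=0$ term of $f$ survives, giving $f(0)=1$, while $g(0)=\mathcal{B}_{n-k,k+1}(1)=1$. Since $f$ and $g$ are continuous on $[0,1]$ with equal derivatives on $(0,1)$ and agree at $p=0$, they coincide on $(0,1)$.

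The main obstacle is the telescoping step: one must reindex carefully and confirm that the stray boundary terms vanish. Here the hypothesis $k\leqslant n-1$ guarantees that the top index $i=n$ never enters the sum (so no spurious $(1-p)^{0}$ term appears), and the lower boundary term $u_{-1}$ drops out because $\binom{n-1}{-1}=0$. Everything else is the routine verification that the two explicit constants agree, i.e.\ that $n\binom{n-1}{k}=1/B(n-k,k+1)$, which reduces to the factorial identity $n!/\big(k!\,(n-1-k)!\big)=n!/\big((n-k-1)!\,k!\big)$.
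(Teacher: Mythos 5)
Your proof is correct and follows essentially the same route as the paper: the reflection identity $1-\mathcal{B}_{k+1,n-k}(p)=\mathcal{B}_{n-k,k+1}(1-p)$ via the substitution $t\mapsto 1-t$, and the binomial--beta link via termwise differentiation with the identities $i\binom{n}{i}=n\binom{n-1}{i-1}$, $(n-i)\binom{n}{i}=n\binom{n-1}{i}$, a telescoping sum, and a boundary-value check. The only cosmetic difference is that you differentiate the lower tail $\sum_{i=0}^{k}$ and compare with $-b_{n-k,k+1}(1-p)$, whereas the paper differentiates the complementary sum $\sum_{i=k+1}^{n}$ and recognizes the density $b_{k+1,n-k}(p)$; the computations are mirror images of each other.
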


{\sc Note 1}: Let us emphasize that the function $\mathcal{B}in_{n,p}(k)$ in Proposition \ref{beta_prop} is understood as the function of $p\in(0,\,1)$, when $n$ and $k$ are fixed.

Proposition \ref{beta_prop} is often met in probabilistic books; e.g., \cite[p. 82]{beta_binom}.

\begin{proposition}\label{main_prop}
Let $n\in \mathbb{N}$, $k\in\{0,\,1,\,\ldots,\,n-1\}$ be fixed and $p\in(0,\,1)$. Then the inequality \eqref{main_ineq} admits the following equivalent representations:
\begin{align}
&\int_{-\infty}^{\infty}\varphi(x)\mathcal{B}_{n-k,\,k+1}\left(\Phi\left(\sqrt{\frac{\varrho}{1-\varrho}}x-\frac{\Phi^{-1}(p)}{\sqrt{1-\varrho}}\right)\right)dx\label{eq:first}\\
&=\int_{0}^{1}\mathcal{B}_{n-k,\,k+1}\left(\Phi\left(\sqrt{\frac{\varrho}{1-\varrho}}\Phi^{-1}(x)-\frac{\Phi^{-1}(p)}{\sqrt{1-\varrho}}\right)\right)dx
\geqslant 1-\gamma,\label{eq:second}
\end{align}
where $\mathcal{B}_{n-k,\,k+1}(\cdot)$ is the cumulative distribution function of the beta random variable.
\end{proposition}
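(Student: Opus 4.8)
The plan is to prove the two claimed equalities in turn, working from left to right: first rewrite the integrand of \eqref{main_ineq} by means of Proposition \ref{beta_prop} to reach \eqref{eq:first}, and then pass from \eqref{eq:first} to \eqref{eq:second} by a single change of variables. No new probabilistic input is needed beyond Proposition \ref{beta_prop}, the reflection symmetry of $\Phi$ recalled in the Introduction, and the substitution $u=\Phi(x)$.

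First I would observe that, for each fixed $x$, the inner sum in \eqref{main_ineq} is exactly the cumulative binomial distribution function $\mathcal{B}in_{n,\,q}(k)$ evaluated at the conditional default probability \eqref{PD},
$$
q=\Phi\left(\frac{\Phi^{-1}(p)-\sqrt{\varrho}x}{\sqrt{1-\varrho}}\right).
$$
Applying Proposition \ref{beta_prop} with this $q$ playing the role of $p$ gives $\mathcal{B}in_{n,\,q}(k)=\mathcal{B}_{n-k,\,k+1}(1-q)$, so it only remains to simplify $1-q$. Using the symmetry $\Phi(t)=1-\Phi(-t)$ one obtains
$$
1-q=\Phi\left(-\frac{\Phi^{-1}(p)-\sqrt{\varrho}x}{\sqrt{1-\varrho}}\right)=\Phi\left(\sqrt{\frac{\varrho}{1-\varrho}}x-\frac{\Phi^{-1}(p)}{\sqrt{1-\varrho}}\right),
$$
which is precisely the argument of the beta distribution function in \eqref{eq:first}. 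Substituting this back under the integral sign identifies \eqref{main_ineq} with \eqref{eq:first}.

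For the second equality I would apply the monotone substitution $x=\Phi^{-1}(u)$, equivalently $u=\Phi(x)$, so that $du=\varphi(x)\,dx$ and the limits $x\to-\infty$, $x\to+\infty$ map to $u\to 0^{+}$, $u\to 1^{-}$. This turns the Gaussian-weighted integral \eqref{eq:first} over $\mathbb{R}$ into an unweighted integral over $(0,\,1)$, with every occurrence of $x$ inside the beta distribution function replaced by $\Phi^{-1}(u)$; relabelling $u$ as $x$ reproduces \eqref{eq:second} verbatim.

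I expect no genuine obstacle, since the whole argument reduces to one application of Proposition \ref{beta_prop}, one use of the reflection symmetry of $\Phi$, and one change of variables. The only points demanding care are the sign bookkeeping inside the two nested copies of $\Phi$ — in particular, checking that the reflection converts the $-\sqrt{\varrho}x$ in \eqref{main_ineq} into the $+\sqrt{\varrho/(1-\varrho)}\,x$ of \eqref{eq:first} — together with a short remark that $u=\Phi(x)$ is a strictly increasing bijection of $\mathbb{R}$ onto $(0,\,1)$, which legitimizes the substitution and the matching of the integration limits.
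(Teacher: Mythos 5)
Your proposal is correct and follows essentially the same route as the paper's own (much terser) proof: apply Proposition \ref{beta_prop} to the inner binomial sum of \eqref{main_ineq} together with the reflection symmetry of $\Phi$ to obtain \eqref{eq:first}, then use the substitution $u=\Phi(x)$ to pass to \eqref{eq:second}. Your version simply spells out the sign bookkeeping and the change of variables that the paper leaves implicit.
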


{\sc Note 2}: The same way as Proposition \ref{beta_prop} relates the cumulative distribution functions of binomial and beta distributions, Proposition \ref{main_prop} relates the cumulative distribution function $\mathbb{P}(\mathcal{D}\leqslant k)$, $k\in\{0,\,1,\,\ldots,\,n\}$ from \eqref{main_ineq} to the ones in \eqref{eq:first} or \eqref{eq:second} when $p\in(0,\,1)$. Of course, $\mathcal{B}_{n-k,\,k+1}(\cdot)$ in \eqref{eq:first} and \eqref{eq:second} can be easily replaced by  
$\mathcal{BN}_{n-k,\,k+1,\,0,\,1}(\cdot)$ due to the argument $\Phi(\cdot)$.

We denote $\mathcal{U}(0,\,1)$ the uniform distribution over the interval $(0,\,1)$. Then, the following proposition is correct.

\begin{proposition}\label{prop:dist_eq}
Let $X\sim\mathcal{N}(0,\,1)$, $Y\sim \mathcal{BN}(n-k,\,k+1,\,0,\,1)$, $Z\sim\mathcal{U}(0,\,1)$, $W\sim\mathcal{B}(n-k,\,k+1)$ and suppose that the random variables in pairs $(X,\,Y)$, $(Y,\,Z)$, $(Z,\,W)$ are independent. Then the distribution function in \eqref{eq:first} or \eqref{eq:second}, when $p\in(0,\,1)$, equals to
\begin{align}
&\mathbb{P}\left(\Phi\left(\sqrt{\varrho}X-\sqrt{1-\varrho}Y\right)>p\right)\label{n_and_bn}\\
=&\mathbb{P}\left(\Phi\left(\sqrt{\varrho}\Phi^{-1}(Z)-\sqrt{1-\varrho}Y\right)>p\right)\label{u_and_bn}\\
=&\mathbb{P}\left(\Phi\left(\sqrt{\varrho}\Phi^{-1}(Z)-\sqrt{1-\varrho}\Phi^{-1}(W)\right)>p\right).\label{u_and_b}
\end{align}
\end{proposition}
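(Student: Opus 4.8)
The plan is to reduce every event of the form $\{\Phi(\cdot)>p\}$ to a linear inequality via the strict monotonicity of $\Phi$, then condition on the normal-type variable using the appropriate pairwise independence and recognise the remaining conditional probability as a beta-normal distribution function. Two elementary distributional facts underpin everything. Since $Z\sim\mathcal{U}(0,\,1)$, we have $\Phi^{-1}(Z)\dist X\sim\mathcal{N}(0,\,1)$; and since $W\sim\mathcal{B}(n-k,\,k+1)$, the identity $\mathbb{P}(\Phi^{-1}(W)\leqslant y)=\mathbb{P}(W\leqslant\Phi(y))=\mathcal{B}_{n-k,\,k+1}(\Phi(y))=\mathcal{BN}_{n-k,\,k+1,\,0,\,1}(y)$ shows $\Phi^{-1}(W)\dist Y\sim\mathcal{BN}(n-k,\,k+1,\,0,\,1)$. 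I will also use throughout that $\varrho\in[0,\,1)$, so $\sqrt{1-\varrho}>0$.

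First I would identify \eqref{n_and_bn} with \eqref{eq:first}. As $\Phi$ is strictly increasing, the event in \eqref{n_and_bn} equals $\{\sqrt{\varrho}X-\sqrt{1-\varrho}Y>\Phi^{-1}(p)\}$. Conditioning on $X=x$ and invoking the independence of the pair $(X,\,Y)$, the law of total probability rewrites \eqref{n_and_bn} as $\int_{-\infty}^{\infty}\varphi(x)\,\mathbb{P}\big(\sqrt{\varrho}x-\sqrt{1-\varrho}Y>\Phi^{-1}(p)\big)\,dx$. Dividing the inner inequality by $\sqrt{1-\varrho}$ turns the event into $\{Y<(\sqrt{\varrho}x-\Phi^{-1}(p))/\sqrt{1-\varrho}\}$, whose probability is $\mathcal{B}_{n-k,\,k+1}(\Phi(\sqrt{\varrho/(1-\varrho)}\,x-\Phi^{-1}(p)/\sqrt{1-\varrho}))$ by the beta-normal distribution function recalled above; this is exactly the integrand of \eqref{eq:first}. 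The identity between \eqref{u_and_bn} and \eqref{eq:second} is obtained verbatim, except that one conditions on $Z=x$, uses the independence of the pair $(Y,\,Z)$, integrates against the uniform density $\mathbf{1}_{(0,1)}(x)$, and replaces $X$ by $\Phi^{-1}(Z)$ so that $\Phi^{-1}(x)$ appears in the integrand. Since \eqref{eq:first} and \eqref{eq:second} already coincide by Proposition \ref{main_prop}, the first two displayed probabilities are equal and both equal the distribution function in question.

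It remains to identify \eqref{u_and_b}. Conditioning on $Z=x$ and using the independence of the pair $(Z,\,W)$, the event becomes $\{\Phi^{-1}(W)<(\sqrt{\varrho}\Phi^{-1}(x)-\Phi^{-1}(p))/\sqrt{1-\varrho}\}$; by the distributional identity $\Phi^{-1}(W)\dist Y$ its conditional probability is the very same $\mathcal{B}_{n-k,\,k+1}(\Phi(\cdots))$ as in the computation of \eqref{u_and_bn}, and integrating against the uniform density again reproduces \eqref{eq:second}. Equivalently, and more conceptually, the three indicators are functions of the pairs $(X,\,Y)$, $(\Phi^{-1}(Z),\,Y)$ and $(\Phi^{-1}(Z),\,\Phi^{-1}(W))$; the three pairwise independence hypotheses make each of these an independent couple with the same two marginals, a standard normal and a beta-normal, so the three pairs share one joint law and the three probabilities agree.

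The computations are routine; the point demanding genuine care is the independence bookkeeping. The hypothesis grants independence only for the pairs $(X,\,Y)$, $(Y,\,Z)$, $(Z,\,W)$, not joint independence of all four variables, so I must verify that each conditioning step invokes exactly one of these couples: conditioning on $X$ for \eqref{n_and_bn} uses $(X,\,Y)$, conditioning on $Z$ for \eqref{u_and_bn} uses $(Y,\,Z)$, and conditioning on $Z$ for \eqref{u_and_b} uses $(Z,\,W)$. This is indeed the case, which is precisely why the pairwise hypotheses suffice. Minor care is also needed with the direction of the inequality after dividing by $\sqrt{1-\varrho}$ and with the harmless passage from strict to non-strict inequalities, justified by the continuity of $Y$.
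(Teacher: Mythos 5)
Your argument is correct and is essentially the paper's own proof run in the opposite direction: the paper starts from the integral in \eqref{eq:first}, applies Fubini and the independence of the relevant pair, and reads off the joint probability, whereas you start from the probability, condition on the normal-type coordinate, and recognise the beta-normal distribution function as the integrand. Your explicit treatment of \eqref{u_and_bn} and \eqref{u_and_b} (including the observation that $\Phi^{-1}(W)$ has the same law as $Y$ and the bookkeeping of which pairwise independence assumption each step uses) merely fills in what the paper compresses into ``by the same arguments.''
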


{\sc Note 3:} Of course, there can be given some other joint distributions' expressions than those provided in \eqref{n_and_bn}, \eqref{u_and_bn}, \eqref{u_and_b}.

\begin{corollary}\label{corollary}
If $k=0$, $n\in\mathbb{N}$ and $X\sim \mathcal{N}(0,1)$, then the left hand-side of the inequality \eqref{main_ineq} is
\begin{align}
&\hspace{-0,5cm}\mathbb{E}\Phi^n\left(\sqrt{\frac{\varrho}{1-\varrho}}X-\frac{\Phi^{-1}(p)}{\sqrt{1-\varrho}}\right)
=\int_{-\infty}^{+\infty}\varphi(x)\Phi^n\left(\sqrt{\frac{\varrho}{1-\varrho}}x-\frac{\Phi^{-1}(p)}{\sqrt{1-\varrho}}\right)\,dx\label{c_first}\\
&\hspace{-0,5cm}=\Phi_{R}\left(-\Phi^{-1}(p),\,\ldots,\,-\Phi^{-1}(p)\right)\label{c_second},
\end{align}
where $\Phi_{R}$ is the Gaussian copula with the correlation matrix
\begin{align*}
R=
\begin{pmatrix}
1&\varrho&\ldots&\varrho\\
\varrho&1&\ldots&\varrho\\
\vdots&\vdots&\ddots&\vdots\\
\varrho&\varrho&\ldots&1
\end{pmatrix}_{n\times n}.
\end{align*}

On top of that, the multivariate density of $\Phi_{R}$ in \eqref{c_second} is
\begin{align}\label{m_density}
\varphi_R:=\frac{\exp\left\{\frac{(1+(n-2)\varrho)\sum_{i=1}^{n}x_i^2-2\varrho\sum_{1\leqslant i<j\leqslant n}x_ix_j}{-2(1-\varrho)(1+(n-1)\varrho)}\right\}}{
\sqrt{(2\pi)^n(1-\varrho)^{n-1}(1+(n-1)\varrho)}},\,
(x_1,\,\ldots,\,x_n)\in\mathbb{R}^n.
\end{align}
\end{corollary}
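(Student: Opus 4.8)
The plan is to handle the four displayed quantities in turn, reducing everything to the factor representation already set up in Section~\ref{section_2}. First I would specialize the integrand of \eqref{main_ineq} to $k=0$. The sum $\sum_{i=0}^{k}$ collapses to the single term $i=0$, leaving the factor $\left(1-\Phi\left(\frac{\Phi^{-1}(p)-\sqrt{\varrho}x}{\sqrt{1-\varrho}}\right)\right)^{n}$. Invoking the normal symmetry $1-\Phi(z)=\Phi(-z)$ recalled in Section~\ref{sec:introdution}, this factor becomes $\Phi^{n}\left(\sqrt{\frac{\varrho}{1-\varrho}}x-\frac{\Phi^{-1}(p)}{\sqrt{1-\varrho}}\right)$, which is exactly \eqref{c_first} together with its expectation form, since $\varphi$ is the density of $X\sim\mathcal{N}(0,1)$.

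Next, to reach the copula expression \eqref{c_second}, I would make explicit the latent-variable structure underlying \eqref{D}. Put $Z_{j}:=\sqrt{\varrho}\tilde{S}+\sqrt{1-\varrho}\tilde{\xi}_{j}$ for independent standard normals $\tilde{S},\tilde{\xi}_{1},\ldots,\tilde{\xi}_{n}$, so that each $Z_{j}\sim\mathcal{N}(0,1)$ and $corr(Z_{i},Z_{j})=\varrho$ for $i\neq j$; hence $(Z_{1},\ldots,Z_{n})$ is a centered Gaussian vector with correlation matrix $R$. The event $\mathcal{D}=0$ is precisely $\{Z_{j}\geqslant\Phi^{-1}(p)\text{ for all }j\}$. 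Because the law of the centered vector is invariant under the reflection $Z\mapsto-Z$ (the matrix $R$ is unchanged), this probability equals $\mathbb{P}(Z_{j}\leqslant-\Phi^{-1}(p)\text{ for all }j)$, which is by definition $\Phi_{R}\left(-\Phi^{-1}(p),\ldots,-\Phi^{-1}(p)\right)$. To confirm that this agrees with \eqref{c_first}, I would condition on $\tilde{S}=x$: the $Z_{j}$ are then conditionally independent, each satisfying $\mathbb{P}(Z_{j}\geqslant\Phi^{-1}(p)\mid\tilde{S}=x)=\Phi\left(\sqrt{\frac{\varrho}{1-\varrho}}x-\frac{\Phi^{-1}(p)}{\sqrt{1-\varrho}}\right)$ by \eqref{PD0} and the symmetry, so raising to the $n$-th power and integrating against $\varphi(x)$ by the law of total probability reproduces \eqref{c_first}, exactly the conditioning argument used to derive \eqref{main_ineq}.

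Finally, for the density \eqref{m_density} I would exploit the equicorrelation form $R=(1-\varrho)I+\varrho J$, where $J$ is the all-ones matrix. Its eigenvalues are $1+(n-1)\varrho$ with multiplicity one and $1-\varrho$ with multiplicity $n-1$, so $\det R=(1+(n-1)\varrho)(1-\varrho)^{n-1}$, which gives the normalizing constant. The Sherman--Morrison formula yields $R^{-1}=\frac{1}{1-\varrho}\left(I-\frac{\varrho}{1+(n-1)\varrho}J\right)$; substituting this into the quadratic form $x^{\top}R^{-1}x$ and using $x^{\top}Jx=\left(\sum_{i}x_{i}\right)^{2}=\sum_{i}x_{i}^{2}+2\sum_{i<j}x_{i}x_{j}$ produces exactly the exponent displayed in \eqref{m_density}.

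I expect the main obstacle to be the conceptual identification in the second step—recognizing that the conditional-independence factor model is precisely a Gaussian copula, and correctly handling the reflection that turns the no-default event into the lower-orthant probability defining $\Phi_{R}$. Once the equicorrelation matrix is in hand, the determinant and inverse, and hence the density \eqref{m_density}, follow by routine linear algebra.
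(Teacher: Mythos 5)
Your proposal is correct and follows essentially the same route as the paper: both identify the $k=0$ integral as the upper-orthant probability of an equicorrelated Gaussian vector obtained by conditioning on the common factor (the paper works directly with $\sqrt{1-\varrho}Y_i-\sqrt{\varrho}X$, which is just your $-Z_i$, so your reflection step is built into its sign choice), and both then compute $\det R$ and $R^{-1}$ for the equicorrelation matrix to obtain \eqref{m_density}. The only cosmetic differences are that you derive $R^{-1}$ constructively via the eigenvalues of $(1-\varrho)I+\varrho J$ and Sherman--Morrison, where the paper simply states $R^{-1}$ and verifies $RR^{-1}=I$, and that you make explicit the $k=0$ collapse of the sum in \eqref{main_ineq}, which the paper leaves implicit.
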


Corollary \ref{corollary} and its proof (see Section \ref{sec:proofs}) implies
$$
\mathbb{E}\Phi^n\left(-\sqrt{\frac{\varrho}{1-\varrho}}X+\frac{\Phi^{-1}(p)}{\sqrt{1-\varrho}}\right)
=\Phi_{R}\left(\Phi^{-1}(p),\,\ldots,\,\Phi^{-1}(p)\right),
$$
where $X\sim\mathcal{N}(0,\,1)$, and these moments of Vasicek distribution \eqref{Vasicek} are connected to the moments of the probability distribution given by
\begin{align}\label{dist_bin_s1}\nonumber
&\mathbb{P}(X=i)=\\
&\binom{n}{i}\int_{-\infty}^{+\infty}\hspace{-0.2cm}\vphi(x)
\Phi^i\left(\frac{\Phi^{-1}(p)-\sqrt{\varrho}x}{\sqrt{1-\varrho}}\right)
\left(1-\Phi\left(\frac{\Phi^{-1}(p)-\sqrt{\varrho}x}{\sqrt{1-\varrho}}\right)\right)^{n-i}\hspace{0cm}dx,
\end{align}
where $n\in\mathbb{N}$ and $i\in\{0,\,1,\,\ldots,\,n\}$, see \eqref{bin_dist_s}. Indeed, due to the well-known moment-generating function of the binomial distribution, the moment-generating function $M(t)$ of \eqref{dist_bin_s1} is
$$
\mathbb{E}\left(\Phi\left(\sqrt{\frac{\varrho}{1-\varrho}}X-\frac{\Phi^{-1}(p)}{\sqrt{1-\varrho}}\right)+\Phi\left(-\sqrt{\frac{\varrho}{1-\varrho}}X+\frac{\Phi^{-1}(p)}{\sqrt{1-\varrho}}\right)e^t\right)^n,\,t\in\mathbb{R},
$$
where $X\sim\mathcal{N}(0,\,1)$. Notice that $M(\log t)$ is the probability-generating function of the underlying distribution.

\section{Proofs}\label{sec:proofs}
This section provides the proofs for three formulated statements in Section \ref{sec:statements}. Majority of the given proofs are commonly known among researchers or scholars and there is difficult to give any initial source.

\begin{proof}[Proof of Proposition \ref{beta_prop}]
Let us first show that 
$$1-\mathcal{B}_{k+1,n-k}(p)=\mathcal{B}_{n-k,k+1}(1-p).$$ Indeed, 
\begin{align*}
&\mathcal{B}_{n-k,\,k+1}(1-p)=
\frac{\int_{0}^{1-p}u^{n-k-1}(1-u)^{k}du}{B(n-k,\,k+1)}
=-\frac{\int_{1}^{p}(1-x)^{n-k-1}x^{k}dx}{B(n-k,\,k+1)}\\
&=\frac{\int_{p}^{1}x^k(1-x)^{n-k-1}dx}{B(k+1,\,n-k)}
=\frac{B(k+1,\,n-k)-\int_{0}^{p}x^{k}(1-x)^{n-k-1}dx}{B(k+1,\,n-k)}\\
&=1-\mathcal{B}_{k+1,n-k}(p).
\end{align*}

We now aim to prove
$$
1-\mathcal{B}in_{n,p}(k)
=\mathcal{B}_{n-k,k+1}(1-p),
$$
where $\mathcal{B}in_{n,p}(k)$ is considered as a function of $p\in(0,1)$ when $k$ and $n$ are fixed. Let us rewrite 
$$
f(p):=1-\mathcal{B}in_{n,p}(k)
=\sum_{i=k+1}^{n}\binom{n}{i}p^i(1-p)^{n-i},\,
k=0,\,1,\,\ldots,\,n-1.
$$
One may observe that $f(0)=0$, $f(1)=1$ and the derivative
\begin{align*}
&\frac{d\,f(p)}{dp}=\sum_{i=k+1}^{n}\binom{n}{i}
\left(ip^{i-1}(1-p)^{n-i}-(n-i)p^i(1-p)^{n-1-i}\right)\\
&=n\sum_{i=k+1}^{n}\left(\binom{n-1}{i-1}p^{i-1}(1-p)^{n-i}-\binom{n-1}{i}p^i(1-p)^{n-1-i}\mathbbm{1}_{\{i\leqslant n-1\}}\right)\\
&=n\Bigg(\binom{n-1}{k}p^{k}(1-p)^{n-1-k}-\binom{n-1}{k+1}p^{k+1}(1-p)^{n-2-k}\\
&+\binom{n-1}{k+1}p^{k+1}(1-p)^{n-2-k}-\binom{n-1}{k+2}p^{k+2}(1-p)^{n-3-k}+\ldots\\
&+\binom{n-1}{n-1}p^{n-1}(1-p)^{0}\Bigg)=
\frac{n!}{k!(n-k-1)!}p^k(1-p)^{n-1-k}
\end{align*}
is positive for all $p\in(0,\,1)$. Thus, $f(p)$ is the cumulative distribution function over the interval $p\in(0,\,1)$ and its derivative is nothing but the density of the beta distribution with parameters $(k+1,\,n-k)$, i.e.,
$$
\frac{d\,f(p)}{dp}=b_{k+1,\,n-k}(p)=\frac{\Gamma(n+1)}{\Gamma(k+1)\Gamma(n-k)}p^k(1-p)^{n-k-1},\,p\in(0,\,1).
$$
\end{proof}

\begin{proof}[Proof of Proposition \ref{main_prop}]
The integral in \eqref{main_ineq} implies \eqref{eq:first} by Proposition \ref{beta_prop}, while \eqref{eq:first} implies \eqref{eq:second} by the change of variable $\Phi(x)\mapsto x$. 
\end{proof}

\begin{proof}[Proof of Proposition \ref{prop:dist_eq}]
The probability \eqref{n_and_bn} is implied by \eqref{eq:first} observing that
\begin{align*}
&\int_{-\infty}^{+\infty}\varphi(x)\left(\int_{-\infty}^{\sqrt{\frac{\varrho}{1-\varrho}}x-\frac{\Phi^{-1}(p)}{\sqrt{1-\varrho}}}bn_{n-k,\,k+1,\,0,\,1}(y)dy\right)dx\\
&=\int_{-\infty}^{+\infty}
\int_{-\infty}^{\sqrt{\frac{\varrho}{1-\varrho}}x-\frac{\Phi^{-1}(p)}{\sqrt{1-\varrho}}}\varphi(x)bn_{n-k,\,k+1,\,0,\,1}(y)\,dx\,dy\\
&=\mathbb{P}\left(Y<\sqrt{\frac{\varrho}{1-\varrho}}X-\frac{\Phi^{-1}(p)}{\sqrt{1-\varrho}}\right)
=\mathbb{P}\left(\sqrt{\varrho}X-\sqrt{1-\varrho}Y>\Phi^{-1}(p)\right)\\
&=\mathbb{P}\left(\Phi\left(\sqrt{\varrho}X-\sqrt{1-\varrho}Y\right)>p\right),
\end{align*}
when $X$ and $Y$ are independent. The remaining probabilities \eqref{u_and_bn} and \eqref{u_and_b} are implied by the integral in \eqref{eq:second} by the same arguments.
\end{proof}

\begin{proof}[Proof of Corollary \ref{corollary}]
Let $a,\,b\in\mathbb{R}$. Assume the random variables $Y_1,\,\ldots,\,Y_n$ are independent and identically distributed by $\mathcal{N}(0,\,1)$. If $X\sim\mathcal{N}(0,\,1)$ and  $Y_1,\,\ldots,\,Y_n$ are conditionally independent of $X$, then
\begin{align*}
&\mathbb{P}(Y_1<aX+b,\,\ldots,\,Y_n<aX+b)\\
&=\int_{-\infty}^{+\infty}\mathbb{P}(Y_1<aX+b,\,\ldots,\,Y_n<aX+b|X=x)
\varphi(x)\,dx\\
&=\int_{-\infty}^{+\infty}\varphi(x)\Phi^n(ax+b)\,dx=\mathbb{E}\Phi^n(aX+b).
\end{align*}
The equality \eqref{c_first} follows by choosing 
$$
a=\sqrt{\frac{\varrho}{1-\varrho}},\,b=-\frac{\Phi^{-1}(p)}{\sqrt{1-p}},
$$
while the equality \eqref{c_second} is implied observing that
\begin{align*}
&\mathbb{P}\left(\sqrt{1-\varrho}Y_1-\sqrt{\varrho}X<-\Phi^{-1}(p),\,\ldots,\,\sqrt{1-\varrho}Y_n-\sqrt{\varrho}X<-\Phi^{-1}(p)\right)\\
&=\Phi_R\left(-\Phi^{-1}(p),\,\ldots,\,-\Phi^{-1}(p)\right),
\end{align*}
where 
\begin{align*}
%\mathbf{\mu}=\left(0,\,\ldots,\,0\right)_{1\times n},\,
R=
\begin{pmatrix}
1&\varrho&\ldots&\varrho\\
\varrho&1&\ldots&\varrho\\
\vdots&\vdots&\ddots&\vdots\\
\varrho&\varrho&\ldots&1
\end{pmatrix}_{n\times n},
\end{align*}
because
\begin{align*}
&\textit{corr}\left(\sqrt{1-\varrho}Y_i-\sqrt{\varrho}X,\,\sqrt{1-\varrho}Y_j-\sqrt{\varrho}X\right)=
\begin{cases}
&1,\,i=j,\\
&\varrho,\,i\neq j
\end{cases}
\end{align*}
and 
\begin{align*}
\mathbb{E}\left(\sqrt{1-\varrho}Y_i-\sqrt{\varrho}X\right)=0,\, i=1,\,\ldots,\,n.
\end{align*}

The determinant of $R$ is
$$
|R|=(1-\varrho)^{n-1}(1+(n-1)\varrho)
$$
and the inverse matrix of $R$ admits the following representation

\begin{align*}
R^{-1}=\frac{1}{(1-\varrho)(1+(n-1)\varrho)}
\begin{pmatrix}
1+(n-2)\varrho&-\varrho&\ldots&-\varrho\\
-\varrho&1+(n-2)\varrho&\ldots&-\varrho\\
\vdots&\vdots&\ddots&\vdots\\
-\varrho&-\varrho&\ldots&1+(n-2)\varrho
\end{pmatrix}.
\end{align*}
Indeed, it is easy to check that $RR^{-1}=I$, where $I$ is the identity matrix. Then, the multivariate density \eqref{m_density} is implied by the formula
$$
\frac{\exp\left\{-\frac{1}{2}(x_1,\,\ldots,\,x_n)R^{-1}(x_1,\,\ldots,\,x_n)^T\right\}}{\sqrt{(2\pi)^n|R|}},\,
(x_1,\,\ldots,\,x_n)\in\mathbb{R}^n,
$$
see, for example, \cite{Gut}, \cite{GK1}, \cite{GK2}.
\end{proof}

\section{Examples of computation}\label{sec:examples}

In this section, we give two examples that illustrate the discussed estimation of default probability $p$. The required computations are performed with program \cite{Mathematica}.

\begin{example}\label{old_example}
Suppose there are up to $3$ defaults expected with probability $1-\gamma$ out of $800$ obligors which are split into three risk classes: $A,\,B$ and $C$, where $A$ represents the lowest risk and $C$ the highest. Assume the numbers of obligors are $100,\,400,\,300$ and the numbers of expected defaults are up to $0,\,2,\,1$ in risk classes $A$, $B$ and $C$ respectively. We apply Propositions \ref{beta_prop}, \ref{main_prop} and the method of conservatism introduced in \cite{Pluto2006} to estimate the probabilities of default $p_A$, $p_B$ and $p_C$ in risk classes $A$, $B$ and $C$.
\end{example}

The method of conservatism (see \cite{Pluto2006} and the description by end of Section \ref{section_2}) states that $p_A$ should be estimated for the entire portfolio, i.e., $n=800$ and $k=3$ in the considered case. The probability $p_B$ should be estimated for the entire portfolio excluding the class $A$, i.e., $n=700$ and $k=3$ in the considered case. Then, the probability $p_C$ is estimated using $n=300$ and $k=1$ as per the riskiest class $C$.

Using Proposition \ref{beta_prop}, the underlying logic stated in subsection \ref{sub1} and the method of conservatism, we obtain Table \ref{table1}. 
\begin{table}[H]
\centering
\begin{tabular}{|c|c|c|c|c|c|c|}
\hline
$\gamma$&$0.5$&$0.75$&$0.9$&$0.95$&$0.99$&$0.999$\\
\hline
$1-\mathcal{B}^{-1}_{797,\,4}(1-\gamma)$&$0.46\%$&$0.64\%$&$0.83\%$&$0.97\%$&$1.25\%$&$1.62\%$\\
\hline
$1-\mathcal{B}^{-1}_{697,\,4}(1-\gamma)$&$0.52\%$&$0.73\%$&$0.95\%$&$1.10\%$&$1.43\%$&$1.85\%$\\
\hline
$1-\mathcal{B}^{-1}_{299,\,2}(1-\gamma)$&$0.56\%$&$0.90\%$&$1.29\%$&$1.57\%$&$2.19\%$&$3.04\%$\\
\hline
\end{tabular}
\caption{The upper bounds of $p_A$, $p_B$ and $p_C$.}\label{table1}
\end{table}

Note that the numbers in Table \ref{table1} are given in \cite{Pluto2006} too and we replicate them for comparison purposes, especially calculating the quantiles of the underlying distribution given by $F_{n-k,\,k+1,\,\varrho}(y)$.

Suppose the asset correlation $\varrho=12\%$ in Example \ref{old_example}. Then, 
using Proposition \ref{main_prop}, the underlying logic stated in subsection \ref{sub2}, the method of conservatism and the function "FindRoot" in progam \cite{Mathematica} we obtain Table \ref{table2} and Table \ref{table3}.

\begin{table}[H]
\centering
\begin{tabular}{|c|c|c|c|c|c|c|}
\hline
$\gamma$&$0.5$&$0.75$&$0.9$&$0.95$&$0.99$&$0.999$\\
\hline
$F_{797,\,4,\,0.12}^{-1}(1-\gamma)$&$2.61$&$2.34$&$2.09$&$1.94$&$1.67$&$1.36$\\
\hline
$F_{697,\,4,\,0.12}^{-1}(1-\gamma)$&$2.57$&$2.29$&$2.04$&$1.90$&$1.62$&$1.31$\\
\hline
$F_{299,\,2,\,0.12}^{-1}(1-\gamma)$&$2.55$&$2.25$&$1.98$&$1.82$&$1.52$&$1.19$\\
\hline
\end{tabular}
\caption{The quantiles of distribution which cumulative distribution function is $F_{n-k,\,k+1,\,\varrho}(y)$.}\label{table2}
\end{table}

\begin{table}[H]
\centering
\begin{tabular}{|c|c|c|c|c|c|c|}
\hline
$\gamma$&$0.5$&$0.75$&$0.9$&$0.95$&$0.99$&$0.999$\\
\hline
$\Phi(a)$&$0.71\%$&$1.41\%$&$2.49\%$&$3.41\%$&$5.88\%$&$10.08\%$\\
\hline
$\Phi(b)$&$0.80\%$&$1.58\%$&$2.76\%$&$3.77\%$&$6.43\%$&$10.91\%$\\
\hline
$\Phi(c)$&$0.84\%$&$1.75\%$&$3.18\%$&$4.41\%$&$7.67\%$&$13.13\%$\\
\hline
\end{tabular}
\caption{The upper bounds of $p_A$, $p_B$ and $p_C$ under the influence of systematic factor. Here $a=-\sqrt{1-\varrho}F_{797,\,4,\,0.12}^{-1}(1-\gamma)$, $b=-\sqrt{1-\varrho}F_{697,\,4,\,0.12}^{-1}(1-\gamma)$, $c=-\sqrt{1-\varrho}F_{299,\,2,\,0.12}^{-1}(1-\gamma)$ as provided in Table \ref{table2}.}\label{table3}
\end{table}

The provided numbers in Table \ref{table3} match the corresponding ones in \cite{Pluto2006} except few cases caused by rounding errors in the fourth decimal place.

\begin{example}\label{new_example}
Suppose there are up to $7$ defaults expected with probability $1-\gamma$ out of $1500$ obligors which are split in four risk classes: $A,\,B$, $C$ and $D$ where $A$ represents the lowest risk and $D$ the highest. Assume the numbers of obligors are $400,\,700,\,250,\,150$ and the numbers of expected defaults are up to $2,\,1,\,3,\,1$ in risk classes $A$, $B$, $C$ and $D$ respectively. We apply Propositions \ref{beta_prop}, \ref{main_prop} and the method of conservatism introduced in \cite{Pluto2006} to estimate the probabilities of default $p_A$, $p_B$ $p_C$ and $p_D$ in risk classes $A$, $B$, $C$ and $D$.
\end{example}

Using Proposition \ref{beta_prop}, the underlying logic stated in subsection \ref{sub1} and the method of conservatism, we obtain Table \ref{table4}.

\begin{table}[H]
\centering
\begin{tabular}{|c|c|c|c|c|c|c|}
\hline
$\gamma$&$0.5$&$0.75$&$0.9$&$0.95$&$0.99$&$0.999$\\
\hline
$\mathcal{B}^{-1}_{1493,\,8}(1-\gamma)$&$0.51\%$&$0.65\%$&$0.78\%$&$0.87\%$&$1.06\%$&$1.30\%$\\
\hline
$\mathcal{B}^{-1}_{1095,\,6}(1-\gamma)$&$0.52\%$&$0.67\%$&$0.84\%$&$0.95\%$&$1.19\%$&$1.49\%$\\
\hline
$\mathcal{B}^{-1}_{396,\,5}(1-\gamma)$&$1.17\%$&$1.56\%$&$1.99\%$&$2.27\%$&$2.87\%$&$3.65\%$\\
\hline
$\mathcal{B}^{-1}_{149,\,2}(1-\gamma)$&$ {\bm{ 1.12}}\%$&$1.78\%$&$2.57\%$&$3.12\%$&$4.34\%$&$5.99\%$\\
\hline
\end{tabular}
\caption{The upper bounds of $p_A$, $p_B$, $p_C$ and $p_D$.}\label{table4}
\end{table}

Notice that $\mathcal{B}^{-1}_{396,\,5}(1/2)>\mathcal{B}^{-1}_{149,\,2}(1/2)$ in Table \ref{table4} and (see \cite[Footnote 6]{Pluto2006}) ''... this is not a desirable effect, a possible – conservative – work-around could be to increment the number of defaults in
grade $D$ up to the point where $p_D$ would take on a greater value than $p_C$ ...''.  

Suppose the asset correlation $\varrho=12\%$ in Example \ref{new_example}. Then, 
using Proposition \ref{main_prop}, the underlying logic stated in subsection \ref{sub2}, the method of conservatism and the function "FindRoot" in progam \cite{Mathematica} we obtain Table \ref{table5} and Table \ref{table6}.

\begin{table}[H]
\centering
\begin{tabular}{|c|c|c|c|c|c|c|}
\hline
$\gamma$&$0.5$&$0.75$&$0.9$&$0.95$&$0.99$&$0.999$\\
\hline
$F_{1493,\,8,\,0.12}^{-1}(1-\gamma)$&2.57&$2.31$&$2.07$&$1.93$&$1.67$&$1.37$\\
\hline
$F_{1095,\,6,\,0.12}^{-1}(1-\gamma)$&$2.57$&$2.30$&$2.06$&$1.92$&$1.65$&$1.35$\\
\hline
$F_{396,\,5,\,0.12}^{-1}(1-\gamma)$&$2.27$&$2.00$&$1.75$&$1.61$&$1.33$&$1.02$\\
\hline
$F_{149,\,2,\,0.12}^{-1}(1-\gamma)$&$ 2.30$&$1.98$&$1.71$&$1.54$&$1.24$&$0.91$\\
\hline
\end{tabular}
\caption{The quantiles of distribution which cumulative distribution function is $F_{n-k,\,k+1,\,\varrho}(y)$.}\label{table5}
\end{table}

\begin{table}[H]
\centering
\begin{tabular}{|c|c|c|c|c|c|c|}
\hline
$\gamma$&$0.5$&$0.75$&$0.9$&$0.95$&$0.99$&$0.999$\\
\hline
$\Phi(a)$&$0.79\%$&$1.51\%$&$2.59\%$&$3.49\%$&$5.58\%$&$9.90\%$\\
\hline
$\Phi(b)$&$0.79\%$&$1.53\%$&$2.64\%$&$3.58\%$&$6.06\%$&$10.23\%$\\
\hline
$\Phi(c)$&$1.64\%$&$3.04\%$&$5.01\%$&$6.60\%$&$10.61\%$&$16.87\%$\\
\hline
$\Phi(d)$&$ {\bm{ 1.56}}\%$&$3.13\%$&$5.45\%$&$7.36\%$&$12.21\%$&$19.76\%$\\
\hline
\end{tabular}
\caption{The upper bounds of $p_A$, $p_B$, $p_C$ and $p_D$ under the influence of systematic factor. Here $a=-\sqrt{1-\varrho}F_{1493,\,8,\,0.12}^{-1}(1-\gamma)$, $b=-\sqrt{1-\varrho}F_{1095,\,6,\,0.12}^{-1}(1-\gamma)$, $c=-\sqrt{1-\varrho}F_{396,\,5,\,0.12}^{-1}(1-\gamma)$, $d=-\sqrt{1-\varrho}F_{149,\,2,\,0.12}^{-1}(1-\gamma)$ as provided in Table \ref{table5}.}\label{table6}
\end{table}

Notice that $F_{396,\,5,\,0.12}^{-1}(1/2)>F_{149,\,2,\,0.12}^{-1}(1/2)$ in Table \ref{table5} and consequently the corresponding upper bounds of $p_C$ and $p_D$ in Table \ref{table6} maintain the upper bound reversal.

\section{Concluding remarks}
As stated, this survey article gives a detailed probabilistic overview of two methods for the upper bound of default probability. The provided insights reveal the important role played by the beta-normal distribution. However, the beta-normal distribution appears to be little studied, compared to the voluminous literature for the separate normal or beta distributions. It would be of interest to get any closed-form of the inverse of $F_{\al,\,\be,\,\varrho}(p)$ (see \eqref{cum_dist}) in terms of a superposition of $\Phi^{-1}_{\mu,\,\sigma^2}(\cdot)$ and $\mathcal{B}^{-1}_{\al,\,\be}(\cdot)$, which possibly would include studying the cumulative distribution function $\Phi_{\mu,\,\sigma^2}\left(a\Phi^{-1}_{\tilde{\mu},\,\tilde{\sigma}^2}(x)+b\right)$ when $a,\,b\in\mathbb{R}$ and $x\in(0,\,1)$.

\section{Acknowledgments}

The author is thankful to Arvydas Karbonskis for his feedback on the draft version of this article and also to Dirk Tasche for pointing to the reference \cite{Dirk} and giving several other valuable comments.

\bibliographystyle{elsarticle-harv} 
\bibliography{cas-refs}

\begin{flushleft}
Andrius Grigutis\\
Institute of Mathematics\\
Faculty of Mathematics and Informatics, Vilnius University\\
Naugarduko 24, LT-03225 Vilnius, Lithuania\\
andrius.grigutis@mif.vu.lt
\end{flushleft}

\end{document}